\def\BibTeX{{\rm B\kern-.05em{\sc i\kern-.025em b}\kern-.08em
    T\kern-.1667em\lower.7ex\hbox{E}\kern-.125emX}}
\newtheorem{problem}{Problem}
\newtheorem{theorem}{Theorem}
\DeclarePairedDelimiter\angles{\langle}{\rangle}
\newcommand{\Fig}{Fig.}
\newcommand{\Section}{Section}
\newcommand{\Table}{Table}
\newcommand{\tool}{\textsc{Scythe}\xspace}
\newcommand{\TLA}{TLA\textsuperscript{+}\xspace}
\newcommand{\Params}{\textsc{Params}\xspace}
\newcommand{\Vars}{\textsc{Vars}\xspace}
\newcommand{\Init}{\textsc{Init}\xspace}
\newcommand{\Next}{\textsc{Next}\xspace}
\newcommand{\Holes}{\textsc{Holes}\xspace}
\newcommand{\Const}{\textsc{Const}\xspace}
\newcommand{\prot}{\ensuremath{X}\xspace}
\newcommand{\run}{\ensuremath{r}\xspace}
\newcommand{\prop}{\ensuremath{\Phi}\xspace}
\newcommand{\raft}{raft-reconfig\xspace}
\newcommand{\spurious}{vacuous\xspace}
\newcommand{\True}{\textit{True}\xspace}
\newcommand{\False}{\textit{False}\xspace}
\newcommand{\Type}[1]{\textit{#1}\xspace}
\newcommand{\Bool}{\Type{Bool}}
\newcommand{\Int}{\Type{Int}}
\newcommand{\Set}[1]{\Type{(Set #1)}\xspace}
\newcommand{\Array}[2]{\Type{(Array #1 #2)}\xspace}
\newcommand{\OfDomain}[1]{\Type{(OfDomain #1)}\xspace}
\begin{document}

\title{Efficient Synthesis of Symbolic Distributed Protocols by Sketching\\
%
\thanks{This material is partly supported by the National Science
Foundation under Graduate Research Fellowship Grant \#1938052,
and Award \#2319500. Any opinion,
findings, and conclusions or recommendations expressed in this material are
those of the authors(s) and do not necessarily reflect the views of the National
Science Foundation.}}

\author{\IEEEauthorblockN{Derek Egolf}
\IEEEauthorblockA{
    \textit{Northeastern University}\\
    Boston, MA \\
    egolf.d@northeastern.edu}
\and
\IEEEauthorblockN{William Schultz}
\IEEEauthorblockA{
    \textit{Northeastern University}\\
    Boston, MA \\
    schultz.w@northeastern.edu}
\and
\IEEEauthorblockN{Stavros Tripakis}
\IEEEauthorblockA{
    \textit{Northeastern University}\\
    Boston, MA \\
    stavros@northeastern.edu}
}
\maketitle
\pagestyle{plain}

\begin{abstract}
    We present a novel and efficient method for synthesis of parameterized
    distributed protocols by sketching. Our method is both syntax-guided and
    counterexample-guided, and utilizes a fast equivalence reduction technique
    that enables efficient completion of protocol sketches, often significantly
    reducing the search space of candidate completions by several orders of
    magnitude. To our knowledge, our tool, \tool, is the first synthesis tool
    for the widely used specification language TLA+. We evaluate \tool on a
    diverse benchmark of distributed protocols, demonstrating the ability to
    synthesize a large scale distributed Raft-based dynamic reconfiguration
    protocol beyond the scale of what existing synthesis techniques can handle.
\end{abstract}

\begin{IEEEkeywords}
    distributed protocols, synthesis, syntax-guided, counterexample-guided,
    sketching
\end{IEEEkeywords}

\section{Introduction}
\label{sec_intro}

Distributed protocols have become a crucial component in the operation of modern
computer systems, including financial infrastructure
\cite{Buterin2013,Buchman2016TendermintBF} and cloud data storage systems
\cite{corbett2013spanner,decandia2007dynamo}. In addition to being consequential
and widely used, the complexity of these protocols makes them notoriously hard
to design and reason about.

Automated verification of distributed protocols has made great advances in
recent years. Specifically, inductive invariant inference methods have allowed
for fuller automation of the verification of safety
properties~\cite{2021ic3posymmetry,2021swisshance,yao2021distai,YaoTGN22,DBLP:conf/fmcad/SchultzDT22,schultz2024scalablearxiv}.
State of the art tools in this domain are able to verify non-trivial
specifications of {\em parameterized}, {\em infinite-state} protocols, written
in languages such as \TLA~\cite{lamport2002specifying} or
Ivy~\cite{2016padonivy}. Such verification efforts include not just protocol
specifications especially designed to fit into the decidable fragment of
Ivy~\cite{padonpaxosEPR}, but also generally undecidable specifications of
protocols such as Raft written in
\TLA~\cite{DBLP:conf/fmcad/SchultzDT22,schultz2024scalablearxiv} or Paxos
written in Ivy~\cite{padonpaxosEPR,GoelSakallahFMCAD2021}. Progress is also
being made towards fuller automation of the verification of liveness properties,
e.g., see~\cite{YaoLivenessPOPL2024}.

On the other hand, automated {\em synthesis} of distributed protocols is less
advanced. This discrepancy might be expected because synthesis is intuitively a
harder problem than verification: verification is about checking that a {\em
given} system is correct, while synthesis involves inventing a system {\em and}
ensuring that it is correct. Theory supports this intuition: model checking
finite-state distributed systems is decidable, but synthesis of finite-state
distributed systems is generally
undecidable~\cite{PnueliRosner90,Thistle2005,TripakisIPL}. Synthesis of
parameterized distributed protocols is also generally
undecidable~\cite{DBLP:journals/corr/JacobsB14}.

An easier problem than doing synthesis from scratch is to do synthesis by
sketching~\cite{LezamaAPLAS2009,ArmandoSTTT2013}. Sketching turns the synthesis
problem into a completion problem: given a {\em sketch} (i.e., an incomplete
system with {\em holes}) the goal is to complete the sketch such that the
completion satisfies a given correctness specification. The holes are typically
missing state variable updates, guards, or parts thereof. Completing a hole
means finding the missing expression. 

In this paper, we consider the problem of synthesis of distributed protocols by
sketching. Contrary to prior works that either apply only to special classes of
protocols~\cite{DBLP:journals/acta/MirzaieFJB20,DBLP:conf/tacas/JaberWJKS23}, or
target protocols in an ad-hoc specification
language~\cite{DBLP:conf/cav/AlurRSTU15}, our work targets general protocols
written in \TLA~\cite{lamport2002specifying}, a highly expressive specification
language with widespread use in both academia and the
industry~\cite{NewcombeAmazon2015}. 

Our approach follows the counterexample-guided inductive synthesis (CEGIS)
paradigm~\cite{ArmandoSTTT2013,GulwaniPolozovSingh2017}: a {\em learner} is
responsible for generating candidate solutions, while a {\em verifier} is
responsible for checking whether a candidate satisfies the requirements. 

Our synthesis method is truly {\em syntax-guided} in the sense that our
synthesis loop explores directly the space of candidate symbolic expressions
that can be generated from a given grammar. In contrast, previous
work~\cite{DBLP:conf/cav/AlurRSTU15} explores the space of (finite)
interpretations of uninterpreted functions. Our synthesis tool generates
expressions, whereas the synthesis tool of~\cite{DBLP:conf/cav/AlurRSTU15}
generates input-output tables (which can then be passed to an external SyGuS
solver~\cite{DBLP:conf/fmcad/AlurBJMRSSSTU13} as a post-processing step). Our
method does not rely on an external SyGuS solver. 

A crucial component of our synthesis algorithm is how exactly we generate
candidate expressions from a given grammar (or grammars, in the case of multiple
holes). A naive, breadth-first enumeration of all possible expressions in the
grammar does not scale. Instead, we use a novel technique that employs {\em
cache-based} enumeration coupled with an {\em equivalence reduction} with {\em
short-circuiting}. This technique allows us to not only avoid checking
semantically equivalent expressions, but also to avoid generating redundant
expressions in the first place. Indeed, some of our experiments show a reduction
in the number of generated expressions of more than three orders of magnitude.

We implement our method in a synthesis tool called \tool. \tool synthesizes
protocols that are parameterized, i.e., in a form that is directly generalizable
to an arbitrary number of nodes. Some of our synthesized protocols can be
instantiated with infinite-domain variables, i.e., we can handle infinite-state
protocols. \tool is able to synthesize complex expressions from non-trivial
grammars. For example, \tool can synthesize the guard expression
\begin{equation}
  \label{expressionintro}
\begin{aligned}
\forall Q_1 \in
\textit{Quorums}(\textit{config}[i]), \hspace{3cm} \\ \forall Q_2 \in
\textit{Quorums}
(\textit{new\_config}): Q_1 \cap Q_2 \neq \emptyset
\end{aligned}
\end{equation}
which is parameterized by $i$, the node that is executing reconfiguration, and it can also synthesize the state variable update
\begin{equation}
  \label{expressionintro2}
\begin{aligned}
\textit{votes}' = [\textit{votes}\ \text{EXCEPT}\ [n_1] = \textit{votes}[n_1] \cup \{n_2\}]
\end{aligned}
\end{equation}
which is parameterized by $n_1$ and $n_2$, the nodes that are exchanging votes.

We evaluate \tool on a suite of non-trivial benchmarks, including variants of
the Raft dynamic reconfiguration
protocol~\cite{ongaro2014search,SchultzDardikTripakisCPP2022,DBLP:conf/fmcad/SchultzDT22}.
\tool is able to synthesize correct and sometimes novel protocols in less than
an hour and often in a matter of minutes. Although \tool itself only guarantees
correctness for a finite protocol instance, we were able to prove a-posteriori
(using TLAPS~\cite{cousineau2012tlaps}) that the synthesized protocols are in
fact correct for an arbitrary number of nodes, as well as in some cases for
infinite-domain state variables.

In summary, this work makes the following contributions: (1) A novel distributed
protocol synthesis method that is both {\em syntax-guided} as well as {\em
counterexample-guided}. (2) Novel techniques to accelerate the search of
candidate completions, often reducing the search space by several orders of
magnitude. (3) The synthesis tool \tool which, to our knowledge, is the only
tool able to handle a diverse suite of real-world distributed protocol
benchmarks written in a broadly used language such as \TLA. (4) Formal
correctness proofs which demonstrate that \tool is able to synthesize
infinite-state, parameterized protocols that are safe for any protocol instance.

\section{Preliminaries}

\subsection{Protocol Representation in \TLA}

We consider symbolic transition systems modeled in
\TLA~\cite{lamport2002specifying}, e.g., as shown in
\Fig~\ref{fig:protocol_example}. A primed variable, e.g., $\textit{vote\_yes}'$,
denotes the value of the variable in the next state. Formally, a {\em protocol}
is a tuple $\angles{\Params, \Vars, \Init, \Next}$. \Params is a set of {\em
parameters} that may vary from one instantiation of the protocol to the other,
but do not change during the execution of the protocol (e.g. a set of node ids
\textit{Node} in \Fig~\ref{fig:protocol_example} line~\ref{line:constant}).
\Vars is the set of {\em state variables} (e.g. \Fig~\ref{fig:protocol_example}
line~\ref{line:vars}). \Init and \Next are predicates specifying, respectively,
the {\em initial states} and the {\em transition relation} of the system,  as
explained in detail in Sections~\ref{sec:protocol-semantics}
and~\ref{sec_conventions}.

\TLA is untyped, but for purposes of synthesis we assume that each symbol in
\Params and \Vars is typed. Supported  types include \Bool and \Int, and
sets or arrays of types. If \Type{T1} and \Type{T2} are types, then an
element of type \Set{T1} is a set of elements of type \Type{T1} and
an element of type \Array{T1}{T2} is a map from elements of type
\Type{T1} to elements of type \Type{T2}.

A tuple $\angles{\Const, \Vars, \Init, \Next}$ denotes an {\em instance} of a
protocol, where \Const is a mapping of \Params to values. For instance,
$[\textit{Node} \mapsto \{\textit{n1}, \textit{n2}, \textit{n3}\}]$
characterizes one instance of the protocol in \Fig~\ref{fig:protocol_example}
and $[\textit{Node} \mapsto \{\textit{a0},\textit{b0}\}]$ characterizes another.
The values of parameters need not be finite sets, e.g., \textit{MaxVal} might
have type \Int and specify a bound on some value. Note that a protocol is
technically not operational until the symbols in \Params are assigned to values,
since the valuation of \Init and \Next may depend on the valuation of the
symbols in \Params. 

A symbol in \Params may have type \textit{Domain}, which designates it as an
{\em opaque set}. If \textit{Prm} is a \Params symbol of type \textit{Domain}
and \Const assigns \textit{Prm} to set $P$, then an object $x$ has type
\OfDomain{Prm} if and only if $x\in P$. For instance, if symbol \textit{Node} in
\Fig~\ref{fig:protocol_example} has type \textit{Domain}, then the symbol
\textit{vote\_yes} has type \Type{(Set (OfDomain Node))}. (Alternatively,
\textit{Node} could have type \Type{(Set Int)}, but this typing would allow the
protocol to, e.g., do arithmetic on node ids, which may not be desirable.) We
discuss in \Section~\ref{sec:problem-statements} the use of opaque sets.

\subsection{Protocol Semantics}
\label{sec:protocol-semantics}

A {\em state} of a protocol instance is an assignment of values to the variables
in \Vars. \Init is a predicate mapping a state to true or false; if a state
satisfies \Init (if it maps to true), it is an initial state of the protocol.

The transition relation \Next is a predicate mapping a pair of states to true or
false. If a pair of states $(s,t)$ satisfies \Next, then there is a {\em
transition} from $s$ to $t$, and we write $s \rightarrow t$. A state is {\em
reachable} if there exists a {\em run} of the protocol instance containing that
state. A run of a protocol instance is a possibly infinite sequence of states
$s_0, s_1, s_2 ...$ such that (1) $s_0$ satifies \Init, (2) $s_i \rightarrow
s_{i+1}$ for all $i \geq 0$, and (3) the sequence satisfies optional {\em
fairness constraints}.  We omit a detailed discussion of fairness. At a
high-level, some transitions are called {\em fair} and under certain conditions,
they must be taken. In this way, certain sequences of states are excluded from
the set of runs of the protocol. In particular, if a sequence of states that
would otherwise be a run ends in a certain cycle, that sequence may be excluded
from the set of runs of a protocol due to fairness constraints.

\subsection{Properties and Verification}

We support standard temporal safety and liveness {\em properties} for specifying
protocol correctness. Safety is often specified using a state {\em invariant}: a
predicate mapping a state to true or false. A protocol instance satisfies a
state invariant if all reachable states satisfy the invariant. A protocol
instance satisfies a temporal property if all runs (or fair runs, if fairness is
assumed) satisfy the property. A protocol satisfies a property if all its
protocol instances satisfy the property.

\begin{figure}
\begin{lstlisting}[mathescape,numbers=left,xleftmargin=20pt,escapechar=|]
$\text{CONSTANT Node}$|\label{line:constant}|
$\textit{vars} := (\textit{vote\_yes},\textit{go\_commit},\textit{go\_abort})$|\label{line:vars}|
$\textit{GoCommit} :=$|\label{line:go-commit}|
    $\wedge\ \textit{vote\_yes} = \textit{Node}$|\label{line:go-commit-vote-yes}|
    $\wedge\ \textit{go\_commit}' = \textit{Node}$|\label{line:go-commit-go-commit}|
    $\wedge\ \textit{go\_abort}' = \textit{go\_abort}$|\label{line:go-commit-go-abort}|
$\textit{VoteYes}(n) :=$|\label{line:vote-yes}|
    $\wedge\ \textit{vote\_yes}' = \textit{vote\_yes} \cup \{n\}$|$\label{line:vote-yes-vote-yes}$|
    $\wedge\ \textit{go\_commit}' = \textit{go\_commit}$
    $\wedge\ \textit{go\_abort}' = \textit{go\_abort}$
$\Init :=$|\label{line:init}|
    $\wedge\ \textit{vote\_yes} = \emptyset$
    $\wedge\ \textit{go\_commit} = \emptyset$ 
    $\wedge\ \textit{go\_abort} = \emptyset$
$\Next :=$|\label{line:next}|
    $\vee\ \textit{GoCommit}$|\label{line:next-go1}|
    $\vee\ \exists n \in \textit{Node} : \textit{VoteYes}(n)$|\label{line:next-vote-yes}|
\end{lstlisting}
\vspace{-1em}
\caption{An example of a \TLA protocol (excerpt).}
\vspace{-1.5em}
\label{fig:protocol_example}
\end{figure}

\subsection{Modeling Conventions}
\label{sec_conventions}

We adopt  
standard conventions on the syntax used to represent protocols, particularly on
how \Next is written. Specifically, we decompose \Next into a disjunction of
{\em actions} (e.g. \Fig~\ref{fig:protocol_example}
lines~\ref{line:next}-\ref{line:next-vote-yes}). An action is a predicate
mapping a pair of states to true or false; e.g., action \textit{GoCommit} of
\Fig~\ref{fig:protocol_example}. We  decompose an action into the conjunction of
a {\em pre-condition} and a {\em post-condition}. A pre-condition is a predicate
mapping a state to true or false; if the pre-condition of an action is satisfied
by a state, then we say the action is {\em enabled} at that state. For instance,
\Fig~\ref{fig:protocol_example} line~\ref{line:go-commit-vote-yes} says that
action \textit{GoCommit} is enabled only when all nodes have voted yes.

We decompose a post-condition into a conjunction of {\em post-clauses}, one for
each state variable. A post-clause determines how its associated state variable
changes when the action is taken. For instance, \Fig~\ref{fig:protocol_example}
line~\ref{line:go-commit-go-commit} shows a post-clause for the state variable
\textit{go\_commit}, denoted by priming the variable name:
$\textit{go\_commit}'$.

In general, post-clauses may be arbitrary predicates involving the primed
variable (e.g. $v' \in e$). We assume that all synthesized post-clauses are of
the form $v' = e$ where $e$ does not contain any primed variables, but make no
assumptions on the post-clauses that we do not synthesize. In synthesis,
non-determism is used extensively, e.g., for modeling the environment. We note
that the $v' = e$ assumption does not limit us to deterministic protocols, since
multiple actions may be enabled at the same state. 

Some actions are {\em parameterized}. For instance on line \ref{line:vote-yes}
of Figure \ref{fig:protocol_example}, the action \textit{VoteYes} is
parameterized by a symbol $n$. From line \ref{line:next-vote-yes}, we can infer
that $n$ denotes an element of the set \textit{Node}. The {\em arguments} of an
action are those symbols like $n$. The {\em domain} of an argument to an action
is the set quantified over for that argument in \Next. For example, the domain
of argument $n$ of action \textit{VoteYes} is the set \textit{Node}. We require
that the domain of an argument be a symbol from \Params of type \textit{Domain}.
An action may have multiple arguments; the domain of an action is the Cartesian
product of the domains of its arguments. A parameterized action denotes a family
of actions, one for each element in its domain. 

If $s \rightarrow t$ is a transition and $(s,t)$ satisfies an action $A$, we can
say that {\em $A$ is taken} and write $s \xrightarrow{A} t$. Note that $(s,t)$
may satisfy multiple actions and we may annotate the transition with any of
them. We write $s \xrightarrow{A(\vec{v})} t$ to explicitly denote the arguments
to $A$; $\vec{v}$ is empty in the case of non-parameterized actions. In this
way, runs of a protocol may be outfitted with a sequence of actions. Annotating
runs of a protocol with actions is critical for our synthesis algorithm, since
annotations allow us to ``blame'' particular actions for causing a
counterexample run (c.f. \Section~\ref{sec:counterexample_generalization}).
Fairness constraints are often specified using actions: we may say {\em $A$ is
(strongly) fair} to mean that action $A$ must be taken if it is enabled
infinitely often. 
\section{Synthesis of Distributed Protocols}

\subsection{Protocol Sketches}

A tuple $\angles{\Params,\Vars,\Holes,\Init,\Next_0}$ is a {\em protocol
sketch}, where \Params, \Vars, and \Init are as in a \TLA protocol and $\Next_0$
is a transition relation predicate containing the hole names found in \Holes.
\Holes is a (possibly empty) set of tuples, each containing a hole name $h$, a
list of argument symbols $\vec v_h$, an output type $t_h$, and a grammar $G_h$.
A hole represents an uninterpreted function of type $t_h$ over the arguments
$\vec v_h$. Each hole is associated with exactly one action $A_h$ and it appears
exactly once in that action. The grammar of a hole defines the set of candidate
expressions that can fill the hole. 

For example, a sketch can be derived from \Fig~\ref{fig:protocol_example} by
replacing the update of line~\ref{line:vote-yes-vote-yes} with
$\textit{vote\_yes}' = h(\textit{vote\_yes}, n)$, where $h$ is the hole name,
the hole has arguments \textit{vote\_yes} and $n$, the return type is \Type{(Set
(OfDomain Node))}, and the action of the hole is $\textit{VoteYes}(n)$. One
grammar for this hole might be (in Backus Normal Form): 
$$E ::= \emptyset\ |\ \{n\}\ |\ \textit{vote\_yes}\ |\ (E\cup E)\ |\ (E\cap E)\ |\ (E\setminus E) $$
which generates all standard set expressions over the empty set, the singleton
set $\{n\}$, and the set \textit{vote\_yes}. We note that, in general, each hole
of a sketch may have its own distinct grammar. 

A hole is either a {\em pre-hole} or a {\em post-hole}. If the hole is a
pre-hole, it is a placeholder for a pre-condition of the action. If the hole is
a post-hole, it is a placeholder for the right-hand side of a post-clause of the
action, e.g., as in $\textit{vote\_yes}' = h(\textit{vote\_yes}, n)$, where
$h$ is a post-hole. We do not consider synthesis of the initial state
predicate and therefore no holes appear in \Init.

The arguments of a hole $h$ may include any of the protocol parameters in
\Params, the state variables in \Vars, and the arguments of $A_h$ if the action
is parameterized. If $h$ is a pre-hole, then its return type is boolean. If the
hole is a post-hole, its type is the same as its associated variable, e.g., hole
$h$ above has the same type as \textit{vote\_yes}.

\subsection{Problem Statements}
\label{sec:problem-statements}

A {\em completion} of a sketch is a protocol derived from the sketch by
replacing each hole with an expression from its grammar. Informally, the
synthesis task is to find a completion of the protocol that satisfies a given
property. The distinction between a protocol and an instance of a protocol is
important here; it may be easier to find a completion of a protocol such that a
specific (e.g., finite) instance satisfies a property than to find a completion
such that all instances satisfy the property. Therefore, we define two versions
of the synthesis problem:

\begin{problem}
\label{problem:instance}
Let $\angles{\Params,\Vars,\Holes,\Init,\Next_0}$ be a sketch and \prop a
property. Let \Const be an assignment to \Params. Find a completion,
$\angles{\Params,\Vars,\Init,\Next}$, of the sketch such that the instance
$\angles{\Const,\Vars,\Init,\Next}$ satisfies \prop.
\end{problem}

\begin{problem}
\label{problem:protocol}
Let $\angles{\Params,\Vars,\Holes,\Init,\Next_0}$ be a sketch and \prop a
property. Find a completion, $\angles{\Params,\Vars,\Init,\Next}$, of the sketch
such that every instance of the completion satisfies \prop.
\end{problem}

In this paper we focus on solving Problem~\ref{problem:instance}. It is a
more tractable problem and we are able to use a model checker as a subroutine in
cases where the instance has finitely many states. It turns out that in many
cases, a solution to Problem~\ref{problem:instance} generalizes and is also a
solution to Problem~\ref{problem:protocol}. This generalizability comes from the
fact that the symbols in \Params are opaque; e.g., we may refer to the set of
nodes \textit{Node}, but we cannot refer to any particular element of
\textit{Node} without quantification. Indeed, as we show in
\Section~\ref{sec:evaluation}, our tool is able to synthesize protocols that
generalize, i.e., they are also solutions to Problem~\ref{problem:protocol}.
\section{Our Approach}
\label{sec:algorithm}

As mentioned in the introduction, we follow the CEGIS paradigm which includes
two main components: a learner and a verifier. In our case, the learner and
verifier interact in a loop with the following steps:
(1) the learner generates a candidate completion \prot, if one exists, that
satisfies a (possibly empty) set of {\em pruning constraints} (i.e., \prot is
pruned if it {\em violates} the constraints),
(2) the verifier checks \prot against the supplied property \prop, 
(3) if \prot satisfies \prop, a solution is found and the algorithm terminates,
(4) if \prot does not satisfy \prop, the verifier produces a counterexample run
$\run$,
(5) the learner uses $\run$ to add new pruning constraints, and we repeat until
a solution is found or the search space is exhausted. 

Our learner component has three subcomponents: the {\em expression generator}
(EG), the {\em pruning constraint checker} (PCC), and the {\em counterexample
generalizer} (CXG). EG generates expressions from grammars, as detailed in
\Section~\ref{sec_EG}. PCC checks each generated expression against the current
set of pruning constraints, as explained in
\Section~\ref{sec:pruning_constraints}. CXG is invoked in Step~(5) to update the
pruning constraints by {\em generalizing} the information contained in the
counterexamples, as detailed in
\Section~\ref{sec:counterexample_generalization}. 

Pruning constraints eliminate candidate completions that are guaranteed to
exhibit previously encountered counterexamples, without having these candidates
checked by the verifier, which is often an expensive subroutine. A naive way to
do that would be to keep a list $L$ of counterexamples seen so far, and then
check whether a candidate exhibits any of the runs in $L$. Instead, we use more
sophisticated pruning constraints that encode counterexamples as logical
constraints on uninterpreted functions, c.f.
Sections~\ref{sec:pruning_constraints}
and~\ref{sec:counterexample_generalization}.

As our verifier in Step~(2), we use an off-the-shelf \TLA model checker,
specifically TLC~\cite{tlcmodelchecker}. We will not discuss \TLA model checking
further as it is standard.

\subsection{Expression Generation}
\label{sec_EG}

Recall that candidate protocols are completions of some sketch, which are in
turn charaterized as members of some grammar. In the case of multi-hole
sketches, completions are characterized as members of the cross-product of the
grammars. Therefore, generating candidate protocols reduces to {\em enumerating}
expressions from grammars. Note that, although grammars have a finite
representation, the {\em language} (i.e., set of expressions) of a grammar may
be infinite and the expressions therein may be arbitrarily large. 

We experimented with three grammar enumeration techniques: (1) a naive
breadth-first algorithm, (2) a cache-based algorithm, and (3) an extension of
the cache-based algorithm that exploits semantic equivalence of expressions. 

\subsubsection{Naive Breadth-First Algorithm}
A naive breadth-first search algorithm is to keep a priority queue (sorted by
size) of {\em partial expressions}, i.e. expressions containing both terminals
and non-terminals. A partial expression is discharged from the queue by
considering all possible ways to replace the non-terminals using the grammar
rules and substituting those into the partial expression. For example, if the
partial expression is $d := E \cup (x \cup E)$ and the grammar has a production
$G := E ::= x\ |\ y\ |\ E \cup E$, we would consider nine different partial
expressions, one for each pair of productions of the $E$ rule, since $E$ appears
twice in the expression $d$. After substituting, we can immediately return those
expressions which do not contain non-terminals and add to the queue those that
do. Our experience was that this algorithm was too slow in practice, since it
iterates over and performs substitutions on larger and larger partial
expressions.

\subsubsection{Cache-Based Algorithm}
In the cache-based algorithm, our learner generates all candidates of size $n$
before it generates any candidates of size $n+1$. Expressions are essentially
trees and our notion of {\em size} is the number of nodes in the tree, e.g., the
size of $(a + b) + c$ is 5. We keep a cache mapping each integer $n$ to the set
of all non-partial expressions of that size, for each non-terminal. We then use
this cache to build larger non-partial expressions, substituting only into
productions (partial expressions) that appear in the grammar. There are often
many expressions of size $n$. We use generators to yield a stream of
expressions, which avoids generating all expressions of a given size at once.

As an example of the cache-based algorithm, suppose we want to generate the
expressions of size 5 for the non-terminal $E$ in the grammar $G$ above. Assume
we already have a cache containing all expressions of size 1,2,3, and 4 for $E$.
Then we can generate all expressions of size 5 by substituting pairs of
expressions into the rule $E ::= E \cup E$ such that the sum of the sizes of the
two expressions is 5. 

\subsubsection{Equivalence Reduction}
Because the cache-based algorithm reuses all expressions of a given size many
times over, it is important to keep the cache as small as possible. In
particular, if two expressions are {\em semantically equivalent}, only one
should appear in the cache. To illustrate, consider that there are only 16
boolean expressions over two variables, modulo equivalence. The grammar $B ::=
x\ |\ y\ |\ \neg B\ |\ B \wedge B$ can express all 16 of these expressions, but
it generates infinitely many expressions. The number of expressions of size $n$
is $O(2^n)$. 

When we generate a new expression, we compute a normal form for that expression.
We then check if we have already generated an expression with that normal form.
If we have, we do not return the new expression and we do not add it to the
cache. We implement normal forms for (1) set expressions containing the
operations $\cup$, $\cap$, and $\setminus$, (2) boolean expressions containing
the operations $\vee$, $\wedge$, and $\neg$, (3) equality expressions, and (4)
inequality expressions. We use DNF as the normal form for boolean expressions.
Our normal form for set expressions exploits the correspondence between set
expressions and boolean functions and then uses DNF. Equality between sets $A$
and $B$ is equivalent to $\emptyset = (A \setminus B) \cup (B \setminus A)$; we
exploit this fact to obtain a normal form for equality between two sets. 

In addition to equivalence reduction by normal forms, we also exploit the
semantics of expressions to {\em short-circuit} the generation of expressions.
Short-circuiting is a technique that allows us to avoid iterating over large
parts of the search space. For instance, if we are generating expressions of
size 5 for the rule $E\cup E$, we can consider pairs of expressions of sizes
(1,4) and (2,3), but we can exploit the commutativity of union by ignoring sizes
(4,1) and (3,2). Without this technique, we would have to iterate over twice as
many pairs of expressions, compute their normal forms, and check if these normal
forms are in the cache. In general, for commutative operation $\odot$ if we are
generating expression $e_1\odot e_2$, we first pick $e_1$ and only iterate over
choices for $e_2$ that are at least as large as $e_1$.

\subsection{Counterexamples and Pruning Constraints}
\label{sec:pruning_constraints}

\subsubsection{Counterexamples}

A counterexample is a run of the protocol annotated with actions: $s_0
\xrightarrow{A_1(\vec{v_1})} s_1 \xrightarrow{A_2(\vec{v_2})} \ldots
\xrightarrow{A_k(\vec{v_k})} s_k$. The run is reported as a safety, deadlock, or
liveness violation. If the run is a safety or deadlock violation, it is
interpreted as a path. If it is a liveness violation, it is interpreted as a
path leading to a cycle, called a {\em lasso}. In the case of liveness, $s_k =
s_i$ for some $i < k$ and $s_i$ is the state that first injects into the cycle.

\subsubsection{Pruning Constraints}
Our pruning constraints are logical constraints over propositional logic with
equality and uninterpreted functions. For example, suppose we have the holes
$h_1(a,b)$ and $h_2(b,c)$. Then, an example of a pruning constraint is the
formula $\pi := (h_1(0, 1) \neq \True) \lor (h_2(1, 2) \neq 1)$. $\pi$
constrains the candidate expressions for the holes $h_1$ and $h_2$. For example,
replacing $h_1$ and $h_2$ with the expressions $a<b$ and $c-b$, respectively,
violates $\pi$, because $0<1=\True$ and $2-1=1$. Replacing $h_1$ and $h_2$ with
$a<b$ and $b$, respectively, also violates $\pi$. Hence, $\pi$ prunes at least
two completions.

Formally, a pruning constraint is a disjunction of {\em terms}, where each term
is a triple containing (1) a hole $h$, (2) a mapping $s^\star$ from the
arguments of $h$ to values, and (3) a literal value of the output type of $h$.
For instance, in $\pi$ above, the first term has $h = h_1(a,b)$, $s^\star =
[a\mapsto 0, b\mapsto 1]$, and $y = \True$. The second term has $h = h_2(b,c)$,
$s^\star = [b\mapsto 1, c\mapsto 2]$, and $y = 1$. Let $\tau := (h, s^\star, y)$
be a term and let $\widehat h$ be an interpretation (in our case, an expression)
for the uninterpreted function $h$. Then $\widehat h$ satisfies $\tau$ if
$\widehat h(s^\star) \neq y$. If $h_1, h_2,...., h_m$ are the uninterpreted
functions in a pruning constraint $\pi$ and $X := \widehat h_1, \widehat h_2,
..., \widehat h_m$ are interpretations for the $h_i$ (i.e. a completion), then
$X$ satisfies $\pi$ if the disjunction of the $\tau$ terms in $\pi$ is
satisfied.

In each run, our algorithm maintains a {\em set} of pruning constraints,
interpreted as a {\em conjunction} (of disjunctions of terms). A completion
satisfies a set of pruning constraints if it satisfies all constraints in the
set. Because we want to avoid seeing any counterexample more than once, the
learner will pass a completion $X$ to the verifier only if $X$ satisfies every
pruning constraint. I.e., a pruning constraint $\pi$ prunes completions that do
{\em not} satisfy $\pi$. PCC checks against the pruning constraints by
substituting the expressions of the holes into the constraints and performing
evaluation. Each type of counterexample (safety, deadlock, liveness) requires a
slightly different encoding as a pruning constraint, as explained next.

\subsection{Counterexample Generalization}
\label{sec:counterexample_generalization}

A pruning constraint $\pi$ is {\em under-pruning} w.r.t. run $\run$ and sketch
$S$ if there exists a completion $\prot$ of $S$ such that $\prot$ satisfies
$\pi$ and $\run$ is a run of $\prot$. $\pi$ is {\em over-pruning} w.r.t. run
$\run$ and sketch $S$ if there exists a completion $\prot$ of $S$ such that
$\prot$ does not satisfy $\pi$ and $\run$ is not a run of $\prot$. $\pi$ is {\em
optimal} if it is neither under- nor over-pruning. $\pi$ is {\em sub-optimal} if
it is under-pruning, but not over-pruning. Our primary goal is to avoid
over-pruning constraints, since over-pruning results in an incomplete algorithm,
i.e., an algorithm that might miss valid completions.

In what follows we present three techniques to encode into pruning constraints,
safety, deadlock, and liveness counterexamples, respectively. Our safety pruning
constraints are optimal (Theorem~\ref{thm:safe}), but our deadlock and liveness
pruning constraints are sub-optimal (Theorems~\ref{thm:dead}
and~\ref{thm:live}). In practice, these sub-optimal constraints are sufficient
to avoid many completions that exhibit the corresponding violations; the
bottleneck in our experiments is not the number of model checker calls. 

\subsubsection{Encoding Safety Counterexamples}

Intuitively, a safety violation can be fixed by ``cutting'' at least one
transition in the counterexample run, either by violating its guard or by
modifying its state update. Let $r = s_0 \xrightarrow{A_1(\vec{v_1})} s_1
\xrightarrow{A_2(\vec{v_2})} \ldots \xrightarrow{A_k(\vec{v_k})} s_k$ be a
safety violation and suppose that the completion is characterized by the
interpretations $\widehat h_1, \widehat h_2,... \widehat h_m$. We denote the
pruning constraint for \run as $\pi_\textit{safe}(\run)$ and construct it as
follows. $\pi_\textit{safe}(\run)$ is a disjunction of {\em $\tau$-terms}. For
each $s \xrightarrow{A(\vec v)} t$ in the counterexample, we construct a set of
$\tau$-terms. In particular, for each hole $h_i$ in the action $A$, we construct
the term $\tau_{A(\vec v),i} := (h_i, s^\star, y)$, where $y := \widehat
h_i(s^*)$ and where $s^\star$ is the predecessor state $s$, restricted to the
arguments of $h_i$, including the arguments to the action $A$. The pruning
constraint is then the disjunction containing all $\tau_{A(\vec v),i}$.

For instance, suppose the safety violation is $[a,b,c\mapsto 0,1,2]
\xrightarrow{A} [a,b,c\mapsto 1,1,2]$. Suppose additionally that $h_1(a,b)$ is a
pre-hole in $A$ and $a' = h_2(b,c)$ is a post-hole in $A$. Suppose that the
completion that resulted in the safety violation had $\widehat h_1(0,1) = \True$
and $\widehat h_2(1,2) = 1$. Then $\tau_{A,1} = (h_1, [a\mapsto 0, b\mapsto 1],
\True)$ and $\tau_{A,2} = (h_2, [b\mapsto 1, c\mapsto 2], 1)$. The pruning
constraint would be $\tau_{A,1}\vee\tau_{A,2}$, which corresponds to $\pi$ from
before. This constraint ensures that the pre-condition of $A$ is not satisfied
in the state $[a,b,c\mapsto 0,1,2]$ or that $a \neq 1$ after taking action $A$
in that state.

\subsubsection{Encoding Deadlock Counterexamples}

Informally, a pruning constraint of a deadlock violation is similar to that of a
safety violation because a deadlock violation can be fixed by making the
deadlocked state $s_k$ unreachable. But another way to fix a deadlock violation
is to make $s_k$ {\em undeadlocked}, which may be done by weakening the
pre-condition of some action that is not enabled in $s_k$.

Formally, the deadlock pruning constraint for run \run is defined to be
$\pi_\textit{dead}(\run) := \pi_\textit{safe}(\run) \lor \pi_\rho(\run)$, where
$\pi_\rho(\run)$ is a disjunction of {\em $\rho$-terms}, each of the form
$\rho_{A(\vec v),i, k} := (h_i, s_k^\star, y)$, where $s_k^\star$ is $s_k$
restricted to the arguments of $h_i$ and where $y := \widehat h_i(s_k^\star)$.
We construct a $\rho$-term for every action $A$ and every pre-hole $h_i$ in $A$
such that $\widehat h_i(s_k) = \False$. Then $\pi_\rho(\run)$ is the disjunction
of all all $\rho$-terms.

\subsubsection{Encoding Liveness Counterexamples}

The constraint for a liveness violation can be thought of as a generalization of
the constraint for a deadlock violation. It is sufficient to do one of (1) break
the path to the cycle using $\tau$-terms, (2) break the cycle using
$\tau$-terms, or (3) weaken the pre-condition of some fair action that is not
enabled in some state of the cycle using $\rho$-terms, making the cycle {\em
unfair}. Formally, we denote our liveness pruning constraint as
$\pi_\textit{live}(\run)$. We construct it as $\pi_\textit{live}(\run) :=
\pi_\textit{safe}(\run) \lor \pi_\rho'$, where $\pi_\rho'$ is the disjunction of
the following $\rho$-terms: For each fair action $A$, for every $\vec v$ in the
domain of $A$, for every $j$ such that $s_j$ is in the cycle, and for every {\em
pre-hole} $h_i$ in $A$ such that $\widehat h_i(s_j) = \False$, we construct the
term $\rho_{A(\vec v),i, j}$.

\subsubsection{Fairness and Stuttering}
Although we are able to handle both weakly and strongly fair actions, we did not
treat them differently above in $\pi_\textit{live}$. That construction may
be under-pruning in the presence of weakly fair actions, but it will
never over-prune and therefore our algorithm is complete. None of our benchmarks
required weak fairness when modeling the synthesized protocols. 

{\em Stuttering} (a special liveness violation) occurs when there are no fair,
enabled, non-self-looping actions in the final state of the violation. In
constrast, deadlock violations occur when there is no enabled action at all. We
denote the pruning constraint for a stuttering violation as
$\pi_\textit{stut}(\run) := \pi_\textit{safe}(\run) \lor \pi_\tau \lor
\pi_\rho'$. In addition to the $\tau$-terms from $\pi_\textit{safe}(\run)$, we
add $\pi_\tau$, which is the disjunction of $\tau_{A(\vec v), i}$ for every
post-hole $h_i$ in every fair action $A$. We add $\pi_\rho'$ as we did for a
typical liveness violation, except the only $s_j$ in the cycle is the last state
of \run, $s_k$.

\begin{theorem}
\label{thm_all}
    Let \run be a counterexample of a completion of the sketch $S$. If \run is a
    safety violation then $\pi_\textit{safe}(\run)$ is optimal w.r.t. \run and
    $S$. If \run is a deadlock, liveness, or stuttering violation then
    $\pi_\textit{dead}(\run)$, $\pi_\textit{live}(\run)$, and
    $\pi_\textit{stut}(\run)$, respectively, are sub-optimal w.r.t. \run and
    $S$. {\em--- The proof can be found in
    Appendix~\ref{sec:proof-theorem-all}.}
\end{theorem}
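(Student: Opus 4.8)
The plan is to reduce the entire theorem to a single \emph{agreement lemma} relating a completion's hole interpretations to whether it admits \run, and then read off each of the four claims by unfolding the definitions of over- and under-pruning. Throughout, let $\widehat h_1,\dots,\widehat h_m$ be the interpretations of the completion that produced \run, let $\prot$ be an arbitrary completion of $S$ under test, and write $\widehat h_i^{\prot}$ for its interpretations. The key observation is that, along \run, the satisfaction of each pre-condition and the outcome of each post-clause is a function only of the fixed (non-hole) syntax, the states and action arguments appearing in \run, and the values the holes take at those points. Since \Init contains no holes, this isolates exactly which hole values matter.

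First I would prove the agreement lemma for safety: $\prot$ does \emph{not} satisfy $\pi_\textit{safe}(\run)$ (i.e.\ $\widehat h_i^{\prot}(s^\star)=y$ for every term $(h_i,s^\star,y)$) if and only if \run is a run of $\prot$. The $(\Leftarrow)$ direction is easy: if every recorded hole value agrees with the original, then each transition of \run evaluates identically under $\prot$ as under the original completion, so \run is again a run. For $(\Rightarrow)$ I would argue transition-by-transition and hole-by-hole: for a post-hole the clause $v'=h_i(\cdots)$ forces $\widehat h_i^{\prot}(s^\star)=s_j.v=y$ because both completions traverse the same successor state $s_j$; for a pre-hole, enabledness of the action at the predecessor state forces the boolean hole to the value it was recorded with. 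Given the lemma, safety optimality is immediate: $\prot$ is pruned exactly when \run is a run of $\prot$, which is precisely ``neither under- nor over-pruning.''

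For the deadlock, liveness, and stuttering constraints I would split each into the two obligations of sub-optimality. \emph{Not over-pruning} is the important half and is proved uniformly: assume \run is not a (fair) run of $\prot$ and show $\prot$ satisfies the constraint. Either some transition of \run is already invalid under $\prot$---then the safety analysis yields a satisfied $\tau$-term, so $\pi_\textit{safe}(\run)$ fires---or all transitions are valid but the violation's distinguishing feature is lost. In the latter case some action that was disabled (resp.\ some \emph{fair} action that fairness did not force) in the original becomes enabled (resp.\ forced) under $\prot$; since holes are the only mutable part and the original had the relevant pre-hole evaluate to \False at the offending state, that pre-hole must flip to \True under $\prot$, producing a satisfied $\rho$-term in $\pi_\rho$ or $\pi_\rho'$ (and, for stuttering, a changed post-clause of a fair action yields a satisfied term of $\pi_\tau$). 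I would run this case analysis separately for deadlock (undeadlocking $s_k$), liveness (a previously un-forced fair action becoming forced in the cycle), and stuttering (a fair non-self-looping action becoming enabled, or an enabled fair action's update changing), checking each time that the emitted term is exactly the one driven \True by the flip.

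\emph{Under-pruning} (which makes these constraints sub-optimal rather than optimal) I would establish by exhibiting a witness completion that satisfies the constraint yet still admits \run: take the original completion and flip a single pre-hole referenced by one $\rho$-term from \False to \True at the offending state. This satisfies the constraint, but since satisfying a disjunction guarantees only \emph{one} local change, the targeted action can remain disabled (e.g.\ blocked by another false conjunct of its pre-condition) so the deadlock, cycle, or stutter persists and \run is still a run of the witness. The main obstacle is the pre-hole half of the agreement lemma: unlike a post-hole, a pre-hole's recorded value is not pinned by the successor state, so the argument must lean on the modeling convention that a pre-hole occurs as a pre-condition clause whose boolean value coincides with its contribution to enabledness, and must handle actions with several pre-holes or additional fixed conjuncts with care. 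A secondary subtlety is fairness in the liveness case: I would check that the not-over-pruning argument survives for both weak and strong fairness (it is conservative at the offending cycle state), while noting, as the authors do, that weak fairness is exactly what leaves the liveness constraint under-pruning rather than optimal.
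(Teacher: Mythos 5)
Your proposal is correct and takes essentially the same approach as the paper's appendix proof: your safety ``agreement lemma'' is precisely the paper's optimality argument for $\pi_\textit{safe}$ (both directions, case-split on pre- vs.\ post-holes), and for the deadlock, liveness, and stuttering constraints your contrapositive not-over-pruning argument is logically identical to the paper's direct one (a completion violating every $\tau$- and $\rho$-term preserves the run together with its violation), while your single-flip witness merely makes explicit the paper's informal under-pruning remark that one satisfied $\rho$-term may weaken only one of several blocking pre-condition conjuncts. The only slight misalignment is attribution: the paper treats the multiple-conjunct issue as the primary source of under-pruning for $\pi_\textit{live}$, with weak fairness only as an additional source rather than the sole one, and your witness construction in fact covers the paper's actual argument.
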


\section{Implementation and Evaluation}
\label{sec:evaluation}

\subsubsection*{Implementation and Experimental Setup}
We implemented our method (\Section~\ref{sec:algorithm}) in a tool, \tool, which
supports many features of the \TLA language and utilizes the TLC model
checker~\cite{tlcmodelchecker} as verifier. \tool is written in Python and takes
as input (1) a TLA+ file defining the protocol and its sketch and (2) a
configuration file defining the grammars and types along with protocol
parameters. Our grammars are typed regular tree
grammars~\cite{DBLP:journals/corr/Engelfriet15} and our implementation
essentially uses the standard SYNTH-LIB input format for
SyGuS~\cite{DBLP:conf/fmcad/AlurBJMRSSSTU13}. We ran each experiment on a
dedicated 2.40 GHz CPU.

\subsubsection*{Benchmarks}
Our benchmark suite contains seven distinct protocols: (1) decentralized lock
service (decentr. lock), (2) server-client lock service (lock\_serv), (3)
Peterson's algorithm for mutual exclusion, (4) two phase commit (2PC), (5)
consensus, (6) sharded key-value store (sharded\_kv), (7) \raft, and (8)
\raft-big. (7) and (8) are non-trivial, reconfigurable variants of the Raft
protocol~\cite{ongaro2014search,SchultzDardikTripakisCPP2022,DBLP:conf/fmcad/SchultzDT22}.
Our benchmarks are adapted from safety verification benchmarks that have been
used in recent years \cite{DBLP:conf/fmcad/SchultzDT22,ivybench}. These existing
benchmarks contain a suite of correct, manually crafted protocols and we refer
to each manually crafted solution as the {\em ground truth}. We  report
statistics about the ground truth for reference, but we do not use this
information during synthesis. For instance, we do not assume knowledge of  
which variables a missing expression depends on.


Adapting verification benchmarks for synthesis by sketching requires a number of
steps, some of which are non-trivial. We discuss the most salient points of
these steps next.

\subsubsection*{Holes}

For each protocol we performed many synthesis experiments by varying the number
of holes in the protocol sketch. All our experiments, as well as instructions
for reproducing them, can be found on GitHub~\cite{scythe-full-results}.
Representative experiments are summarized in \Table~\ref{tab:results}, explained
below.

\subsubsection*{Grammars}

Each hole requires a grammar. \tool is flexible; the user can provide a
different grammar for each hole, or reuse grammars across holes. \tool grammars
are {\em modular} in the sense that they contain a {\em general-purpose} part
(e.g., the grammar of boolean or arithmetic or set expressions) plus a {\em
hole-specific} part (e.g., the terminals which are the hole's arguments). We
implemented a library that allows to build  grammars by (1) automatically
constructing non-terminals based on the types of the hole's arguments and (2)
exposing to the user common sub-grammars that can be deployed across protocols.

\subsubsection*{Liveness and Fairness}

Our benchmarks come from existing suites focusing on {\em safety}
verification~\cite{DBLP:conf/fmcad/SchultzDT22,ivybench}. Performing
synthesis against only safety properties often results in {\em \spurious}
solutions that satisfy safety in trivial ways  
(e.g. by filling a pre-hole with the expression \False). Therefore, we augment
each benchmark with additional liveness properties and any necessary fairness
constraints.

\subsubsection*{Implementability Constraints}

In addition to excluding \spurious solutions by adding extra properties, we
sometimes need to exclude {\em unimplementable} solutions, for instance,
solutions violating implicit communication/observability constraints between the
protocol processes. For example, replacing the post-condition in
line~\ref{line:vote-yes-vote-yes} of \Fig~\ref{fig:protocol_example} with
$\textit{vote\_yes}' = \emptyset$ results in an unimplementable protocol because
a node cannot directly change the vote state of another node. To avoid such
solutions, we used arrays instead of sets (e.g., \textit{vote\_yes} is an array
mapping process ids to booleans). Then, we restricted the grammar to only
contain array access expressions with appropriate indices.

\subsubsection*{Explicitly Modeling the Environment}

We had to modify several of the verification benchmarks
of~\cite{DBLP:conf/fmcad/SchultzDT22} in order to explicitly separate  the
(controllable) protocol from its (uncontrollable) environment, so as to prevent
synthesis of parts belonging to the environment.


\subsubsection*{Results}

\begin{table*}[]
\centering
\resizebox{0.8\textwidth}{!}{%
\begin{tabular}{|l|l|l|l|l|l|l|r|l|l|r|l|}
\hline
\multicolumn{2}{|l}{}                                         & \multicolumn{4}{|l}{}                                                                                                       & \multicolumn{6}{|c|}{Execution Stats}                                                                                                                                                                                                                                                                \\
\multicolumn{2}{|l}{}                                         & \multicolumn{4}{|c|}{Sketch Parameters}                                                                                     & \multicolumn{3}{c}{w/o eq. reduction}                                                                                                            & \multicolumn{3}{c|}{w/ eq. reduction}                                                                                                                                       \\
\hline
Protocol & \begin{tabular}[c]{@{}l@{}}\TLA\\ LOC\end{tabular} & ID & \begin{tabular}[c]{@{}l@{}}\#pre/post\\ holes\end{tabular} & $k$ & \begin{tabular}[c]{@{}l@{}}gram.\\ LOC\end{tabular} & \begin{tabular}[c]{@{}l@{}}\textbf{generated}/\\ model checked\end{tabular} & \multicolumn{1}{|c|}{$k'$} & \begin{tabular}[c]{@{}l@{}}\textbf{total} / model\\ checking time\end{tabular} & \begin{tabular}[c]{@{}l@{}}\textbf{generated}/\\ model checked\end{tabular} & $k'$ & \begin{tabular}[c]{@{}l@{}}\textbf{total} / model\\ checking time\end{tabular} \\
\hline
decentr. lock & 48 & 486 & 1 / 3 & 21 & 28 & \textbf{93403} / 136 & 16 & \textbf{674} / 193 & \textbf{3020} / 117 & \textit{16} & \textbf{190} / 175 \\
\hline
lock\_serv & 83 & 599 & 2 / 6 & 16 & 22 & \textbf{384569} / 85 & 16 & \textbf{2116} / 134 & \textbf{7483} / 80 & \textit{16} & \textbf{159} / 120 \\
lock\_serv & 83 & 611 & 2 / 6 & 16 & 22 & \textbf{665463} / 104 & $\geq$16 & \textbf{TO} / 1094 & \textbf{4064} / 84 & \textit{16} & \textbf{145} / 124 \\
\hline
peterson & 105 & 475 & 3 / 1 & 19 & 53 & \textbf{442553} / 324 & 12 & \textbf{2731} / 453 & \textbf{485} / 243 & \textit{12} & \textbf{348} / 346 \\
peterson & 105 & 375 & 2 / 2 & 19 & 53 & \textbf{583201} / 264 & 13 & \textbf{3355} / 356 & \textbf{1369} / 267 & \textit{13} & \textbf{364} / 357 \\
peterson & 105 & 413 & 3 / 1 & 22 & 53 & \textbf{582616} / 353 & $\geq$12 & \textbf{TO} / 602 & \textbf{7073} / 1259 & \textit{15} & \textbf{1809} / 1753 \\
peterson & 105 & 547 & 2 / 6 & 22 & 53 & \textbf{643529} / 167 & $\geq$16 & \textbf{TO} / 483 & \textbf{5569} / 222 & \textit{17} & \textbf{329} / 301 \\
\hline
2pc & 134 & 303 & 2 / 0 & 15 & 46 & \textbf{690411} / 24 & $\geq$8 & \textbf{TO} / 1494 & \textbf{65994} / 23 & \textit{9} & \textbf{388} / 43 \\
2pc & 134 & 558 & 3 / 5 & 18 & 46 & \textbf{410675} / 87 & 14 & \textbf{2301} / 179 & \textbf{89027} / 88 & \textit{14} & \textbf{629} / 171 \\
2pc & 134 & 485 & 2 / 2 & 17 & 46 & \textbf{681190} / 44 & $\geq$10 & \textbf{TO} / 492 & \textbf{493492} / 55 & \textit{11} & \textbf{2654} / 110 \\
2pc & 134 & 513 & 2 / 6 & 18 & 46 & \textbf{642178} / 162 & $\geq$18 & \textbf{TO} / 1641 & \textbf{98009} / 211 & \textit{18} & \textbf{886} / 382 \\
\hline
consensus & 127 & 624 & 2 / 6 & 17 & 56 & \textbf{550501} / 97 & 17 & \textbf{3442} / 606 & \textbf{9988} / 63 & \textit{17} & \textbf{427} / 375 \\
consensus & 127 & 550 & 2 / 6 & 22 & 56 & \textbf{483126} / 162 & $\geq$17 & \textbf{TO} / 1116 & \textbf{53994} / 286 & \textit{18} & \textbf{2291} / 2011 \\
\hline
sharded\_kv & 112 & 302 & 1 / 1 & 13 & 44 & \textbf{248298} / 13 & 13 & \textbf{1325} / 49 & \textbf{469} / 14 & \textit{13} & \textbf{59} / 57 \\
sharded\_kv & 112 & 365 & 1 / 3 & 22 & 44 & \textbf{611512} / 129 & $\geq$16 & \textbf{TO} / 941 & \textbf{3149} / 149 & \textit{17} & \textbf{472} / 455 \\
\hline
\raft & 174 & 463 & 1 / 3 & 21 & 82 & \textbf{64832} / 64 & 14 & \textbf{462} / 128 & \textbf{1958} / 65 & \textit{14} & \textbf{139} / 129 \\
\raft & 174 & 343 & 2 / 0 & 21 & 82 & \textbf{608586} / 215 & $\geq$11 & \textbf{TO} / 484 & \textbf{41411} / 251 & \textit{17} & \textbf{750} / 530 \\
\raft & 174 & 121 & 1 / 0 & 18 & 82 & \textbf{694552} / 12 & $\geq$12 & \textbf{TO} / 3589 & \textbf{271} / 13 & \textit{14} & \textbf{31} / 27 \\
\hline
\raft-big & 304 & 708 & 1 / 3 & 21 & 85 & \textbf{67237} / 78 & 14 & \textbf{1815} / 1465 & \textbf{3155} / 84 & \textit{14} & \textbf{1668} / 1651 \\
\raft-big & 304 & 709 & 2 / 0 & 21 & 85 & \textbf{2231397} / 220 & $\geq$12 & \textbf{TO\textsuperscript{**}} / 3950 & \textbf{282106} / 252 & \textit{$\geq$17} & \textbf{TO}\textsuperscript{**} / 12943 \\
\raft-big & 304 & 710 & 1 / 0 & 18 & 85 & \textbf{658492} / 12 & $\geq$12 & \textbf{TO} / 3588 & \textbf{1369} / 13 & \textit{14} & \textbf{368} / 359 \\
\raft-big & 304 & 714 & 1 / 7 & 25 & 85 & \textbf{221648} / 101 & 18 & \textbf{2662} / 1519 & \textbf{6500} / 102 & \textit{18} & \textbf{1802} / 1768 \\
\hline
\end{tabular}
}
\caption{}
\label{tab:results}
\vspace{-3em}
\end{table*}

\TLA LOC is the number of lines of code of the ground truth \TLA protocol
specification, which is the same as the lines of code in the sketch and the
synthesized protocols, since all synthesized expressions are printed to one
line, regardless of size. ID refers to the number used to identify the
experiment in the full results table~\cite{scythe-full-results}. ``\#pre/post
holes'' is the number of pre- and post-holes in the sketch, and $k$ refers to $k
= k_1 + k_2 + ... + k_n$, where each $k_i$ is the {\em size} of the expression
(c.f. \Section~\ref{sec_EG}) used in the ground truth protocol for the $i$th
hole. ``gram. LOC'' is the number of lines (non-boilerplate) code in the python
script used to generate the grammar. Every protocol uses the same grammar
generation script, regardless of which or how many holes are poked.

We report Execution Statistics for the tool with and without equivalence
reduction. The column ``generated / model checked'' reports the number of
completions generated by the tool vs those model checked (the rest were pruned).
The column $k'$ is either the size of the expression found by the tool, or the
size of the largest expression the tool considered before it timed out (marked
with a $\geq$ symbol). If there are multiple holes then $k'$ is the sum of all
expression sizes. Column ``total / model checking time'' reports the total
execution time vs the time devoted to model checking (both in seconds). TO
indicates that the tool timed out after 1 hour; TO\textsuperscript{**} is
explained below. Note that TLC is called without a timeout; hence, it performs
exhaustive model checking on the finite protocol instances  specified by the
user configuration. 

As \Table~\ref{tab:results} shows, our efficient expression generation technique
with equivalence reduction achieves impressive results, sometimes reducing the
number of generated expressions by more than three orders of magnitude (c.f.
\raft ID 121 where only 271 expressions are generated with reduction, vs $>$
690,000 without reduction at the TO point). In all cases, the number of
completions model checked is much smaller than those generated, which shows how
critical pruning constraints are to scalability. With equivalence reduction,
execution time is typically dominated by model checking, although there are
exceptions (e.g. 2pc). Without equivalence reduction, the time is typically
dominated by expression generation, which demonstrates the importance of the
equivalence reduction.

Qualitatively, \tool often synthesizes large, non-trivial expressions, e.g.,
single expressions of size 14 in the cases of \raft ID 121 and
\raft-big ID 714, and multiple expressions of combined size up to 18 in
other cases.  Expressions~(\ref{expressionintro})
and~(\ref{expressionintro2}) shown in~\Section~\ref{sec_intro} are two concrete
examples of synthesized expressions.


\subsubsection*{Novel Solutions}

The protocols synthesized by \tool were often identical (or almost identical, up
to commutativity of an operator such as $\land$, etc.) to the ground truth. In
other cases, however, \tool found novel, non-\spurious solutions. \tool often
found solutions with shorter expressions. One notable example comes from the
experiment 2pc ID 303, where instead of the ground-truth expression $e_1 :=
\emptyset \neq (P \setminus A) \cup (P \cap N)$, \tool found the expression $e_2
:= P \neq (A \setminus N)$, where $P$ is the set of all nodes, $A$ is the set of
alive nodes, and $N$ is the set of nodes that voted no. So $e_1$ says ``There is
a node that is dead or there is at least one node that voted no.'' In the
context of the protocol, $e_1$ and $e_2$ are equivalent, but a proof requires
the subtle reasoning that both $A$ and $N$ are subsets of $P$, since $P$ is the
set of all nodes.

\subsubsection*{Correctness of Infinite Instances}

A protocol synthesized by \tool is a solution to Problem~\ref{problem:instance},
i.e., is correct for the finite instance specified by the user. This correctness
follows from the fact that during the synthesis loop the verifier (TLC) exhausts
the state space of the specified finite instance. As it turns out, the protocols
of \Table~\ref{tab:results} produced by \tool are also solutions to
Problem~\ref{problem:protocol}. Specifically, for each protocol of
\Table~\ref{tab:results} except peterson, we used the \TLA Proof System
(TLAPS)~\cite{cousineau2012tlaps} to prove that all instances of that protocol
satisfy the key safety property (our TLAPS proofs did not consider liveness; the
four peterson variants involve only two processes and need no extra
verification). For our TLAPS proofs we used techniques similar to those reported
in~\cite{SchultzDardikTripakisCPP2022}.

In all but two cases, the initial solutions produced by \tool proved to be
correct. For \raft ID 343 and \raft-big ID 709, \tool initially produced a
solution which is correct for up to 3 nodes, but which we were surprised to find
is incorrect for 4 or more nodes. To address this scenario, we added to the tool
an {\em extra-check} option to perform an additional model checking step with
larger parameter values than those used in the synthesis loop, right before
outputting the final solution (if the extra-check fails, the tool continues to
search for a solution). \tool with extra-check found a correct (for all
instances) solution for \raft ID 343 in 750 secs (this includes the time spent
for extra-checks). \tool also found a correct (for all instances) solution for
\raft-big ID 709, although it timed out after a total of four hours
(TO\textsuperscript{**}) while performing the final extra-check for 4
nodes---that single final extra-check took about 2 hours.

\section{Related Work}
\label{sec:rel-work}

Past works synthesize explicit-state, finite-state
machines~\cite{ScenariosHVC2014,DBLP:journals/sigact/AlurT17,DBLP:conf/atva/EgolfT23,DBLP:journals/sttt/FinkbeinerS13}.
In contrast, we synthesize symbolic and parameterized infinite-state machines.
TRANSIT~\cite{DBLP:conf/pldi/UdupaRDMMA13} cannot process counterexamples
automatically and requires a human in the synthesis loop.
\cite{DBLP:journals/acta/MirzaieFJB20} and \cite{DBLP:conf/cav/BloemBJ16} use
{\em cut-off} techniques which only apply to a special class of self-stabilizing
protocols in symmetric networks, and \cite{DBLP:conf/tacas/JaberWJKS23} study a
special class of distributed agreement-based systems.
\cite{DBLP:conf/opodis/Lazic0WB17} consider only threshold-guarded distributed
protocols. In contrast, our work applies to general distributed protocols. 

As discussed in \Section~\ref{sec_intro}, \cite{DBLP:conf/cav/AlurRSTU15}
synthesize interpretations of uninterpreted functions represented as finite
lookup tables, whereas we synthesize symbolic expressions directly. We use \TLA
models with parameterized actions. In contrast, \cite{DBLP:conf/cav/AlurRSTU15}
use extended finite state machines (EFSMs) which do not have parameterized
actions. It is unclear whether expressions such as~(\ref{expressionintro})
and~(\ref{expressionintro2}) on page~\pageref{expressionintro2} could be
synthesized by~\cite{DBLP:conf/cav/AlurRSTU15}.

Like~\cite{DBLP:conf/cav/AlurRSTU15}, we use CEGIS and our counterexample
encodings are similar. Unlike~\cite{DBLP:conf/cav/AlurRSTU15}, we rely neither
on an external SyGuS solver nor on an SMT solver.
\cite{DBLP:conf/cav/AlurRSTU15} encodes the search space of candidate
interpretations as SMT formulas and calls an SMT solver to generate the next
candidate. SMT queries are both expensive and numerous in the context of CEGIS.
In contrast, we use efficient grammar enumeration techniques and we bypass SMT
solvers by checking candidate expressions directly against the pruning
constraints (\Section~\ref{sec:algorithm}).

Like~\cite{DBLP:conf/cav/AlurRSTU15}, our tool synthesizes solutions that are
guaranteed correct only up to the finite instances model checked in the CEGIS
loop. Unlike~\cite{DBLP:conf/cav/AlurRSTU15}, we went one step further and
proved with TLAPS that the solutions produced by our tool are actually correct
for all instances. As discussed in \Section~\ref{sec:evaluation}, this step is
not redundant: there were surprising cases of solutions which are correct for 3
nodes but not for 4 or more nodes. It is unclear whether the protocols
synthesized in~\cite{DBLP:conf/cav/AlurRSTU15} are correct beyond the finite
model checked instances.

\cite{KatzPeled2009} use genetic programming and \cite{ml-based-synthesis} use
machine learning for synthesis. Generally, these approaches are not guaranteed
to find a solution even if one exists, i.e. they are incomplete. In contrast,
our approach is complete.

None of the works cited above use syntax to guide the search, none use
equivalence of expressions with short-circuiting to reduce the search space, and
none handle state variables with infinite domains. To our knowledge, ours is the
only truly syntax-guided synthesis method for symbolic, parameterized
distributed protocols.

Existing SyGuS solvers use SMT formulas to express properties, and are therefore
not directly applicable to distributed protocol synthesis which requires
temporal logic properties. But our techniques for generating expressions and
checking them against pruning constraints are generally related to term
enumeration strategies used in SyGuS~\cite{DBLP:conf/fmcad/AlurBJMRSSSTU13}.
Both EUSolver~\cite{DBLP:conf/tacas/AlurRU17} and cvc4sy~\cite{RBN19} are SyGuS
solvers that generate larger expressions from smaller expressions. EUSolver uses
divide-and-conquer techniques in combination with decision tree learning and is
quite different from our approach. To our knowledge, EUSolver does not employ
equivalence reduction. The ``fast term enumeration strategy'' of cvc4sy is
similar to our cache-based approach and also uses equivalence reduction
techniques. To our knowledge, cvc4sy does not use short-circuiting.


\section{Conclusion}

We present the only, to our knowledge, truly syntax-guided synthesis method for
symbolic, parameterized, infinite-state distributed protocols. We show
experimentally that our method and tool are able to synthesize non-trivial
completions across a broad set of non-trivial protocols written in \TLA, and
prove that these completions generalize correctly (i.e., preserve safety) in all
possible instances.

Future work includes:
(1) further ways to reduce the search space and short-circuit parts of the
search;
(2) optimization of the \tool-TLC interface to avoid running a new instance of
(and repeatedly initializing) TLC each time \tool needs to check a candidate
protocol;
and
(3) automating the final, all-instances verification step (generally an
undecidable problem), by potentially combining TLAPS with state of the art
inductive invariant inference techniques~\cite{schultz2024scalablearxiv}.


\bibliographystyle{plain}
\bibliography{bib}

\appendix

\subsection{Proof of Theorem~\ref{thm_all}}
\label{sec:proof-theorem-all}

Theorem~\ref{thm_all} follows from the four theorems below.

\begin{theorem}
    Let \run be a safety violation of a completion of the sketch $S$. Then
    $\pi_\textit{safe}(\run)$ is optimal w.r.t. \run and $S$.
    \label{thm:safe}
\end{theorem}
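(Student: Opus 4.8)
The plan is to reduce optimality to a single biconditional and then prove that biconditional transition by transition. Unfolding the definitions, $\pi_\textit{safe}(\run)$ is not under-pruning exactly when every completion $\prot$ that satisfies $\pi_\textit{safe}(\run)$ fails to have $\run$ as a run, and it is not over-pruning exactly when every completion $\prot$ that does \emph{not} satisfy $\pi_\textit{safe}(\run)$ does have $\run$ as a run. Together these two conditions are equivalent to the single statement: for every completion $\prot$ of $S$,
\[
\run \text{ is a run of } \prot \iff \prot \text{ does not satisfy } \pi_\textit{safe}(\run).
\]
So the whole theorem collapses to this equivalence. I would first rewrite the right-hand side concretely. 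Let $\widehat h_i$ be the interpretations of the counterexample-producing completion and $\widehat h_i'$ those of the candidate $\prot$. Since $\pi_\textit{safe}(\run)$ is a disjunction of the terms $\tau_{A(\vec v),i}=(h_i,s^\star,\widehat h_i(s^\star))$, and a completion satisfies the disjunction iff it satisfies some term (i.e. $\widehat h_i'(s^\star)\neq\widehat h_i(s^\star)$ for some term), ``$\prot$ does not satisfy $\pi_\textit{safe}(\run)$'' is precisely the assertion that $\prot$ \emph{agrees} with the original completion on the value of every hole, at every annotated transition of $\run$.

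The heart of the argument is a per-transition lemma: the annotated step $s\xrightarrow{A(\vec v)} t$ is a valid transition of $\prot$ iff $\widehat h_i'(s^\star)=\widehat h_i(s^\star)$ for every hole $h_i$ occurring in $A$. Here I would lean on the modeling conventions of Section~\ref{sec_conventions}: each hole occurs exactly once in $A$, so its entire contribution to $A$ is the single value $\widehat h_i(s^\star)$, while all remaining parts of $A$ (and all of \Init) are hole-free and hence identical across completions. For a post-hole, whose post-clause has the synthesized form $v'=h_i(\cdots)$, validity of the step forces $t(v)=\widehat h_i'(s^\star)$; since the step was valid in the original completion we also have $t(v)=\widehat h_i(s^\star)$, so validity is equivalent to agreement. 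For a pre-hole, which appears as a conjunct of the pre-condition, validity forces that conjunct true, i.e. $\widehat h_i'(s^\star)=\True=\widehat h_i(s^\star)$, and conversely agreement keeps the conjunct true while leaving the rest of the enabling condition unchanged. This yields both directions of the lemma.

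Finally I would chain the lemma over the finitely many steps of $\run$. Because $\run$ is a safety violation it is a finite path, so being a run of $\prot$ requires only that $s_0$ satisfy \Init (a hole-free, hence completion-independent predicate, already true since $\run$ was a run of the original completion) together with validity of each annotated step; no fairness obligations arise. By the lemma, all steps are valid in $\prot$ iff $\prot$ agrees with the original completion on every relevant hole value, which is exactly the negation of satisfying $\pi_\textit{safe}(\run)$. This establishes the displayed equivalence and hence optimality. I expect the main obstacle to be pinning down the pre-hole direction cleanly: arguing that disagreeing on a pre-hole is both necessary and sufficient to cut the step rests essentially on the convention that a pre-hole is a conjunct of the pre-condition, and on reading ``$\run$ is a run of $\prot$'' as preservation of the \emph{annotated} run, so that a cut step cannot be silently rescued by a different action enabled at $s$. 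Making these two conventions explicit is what keeps the equivalence tight in both directions, and is where I would concentrate the care.
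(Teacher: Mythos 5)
Your proposal is correct and follows essentially the same route as the paper's proof: both reduce optimality to the biconditional ``$\prot$ satisfies $\pi_\textit{safe}(\run)$ iff $\run$ is not a run of $\prot$'' and establish it transition by transition, splitting into pre-hole and post-hole cases (your reformulation of non-satisfaction as \emph{agreement} with the original completion's hole values is just a repackaging of the paper's $\tau$-term analysis). If anything, you are slightly more explicit than the paper about two points it leaves implicit---that \Init and the hole-free parts of actions are completion-independent, and that the counterexample must be read as an \emph{annotated} run so a cut step cannot be rescued by a different action---but the underlying argument is the same.
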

\begin{proof}
    For brevity, $\pi := \pi_\textit{safe}(\run)$. To show that $\pi$ is
    optimal, we must show that for any completion \prot of the sketch $S$, \prot
    satisfies $\pi$ if and only if \run is not run of \prot. First suppose \prot
    satisfies $\pi$. Then $\pi$ is not an empty disjunction and moreover there
    exists a $\tau_{A(\vec v),i}$ in $\pi$ that is satisfied by \prot. This
    $\tau$-term corresponds to some transition $s\xrightarrow{A(\vec v)}t$ in
    \run. If the $h_i$ corresponding to the $\tau$-term is a pre-hole, then the
    action $A(\vec v)$ is disabled in state $s$ of \prot. Suppose $h_i$ is a
    post-hole corresponding to the state variable $x$. Then $x$ has some value
    $v_x$ in $t$. Because \prot satisfies $\pi$, we know that after taking
    action $A(\vec v)$ in state $s$, $x$ has some value $v_x^\star \neq v_x$. In
    either case, \run is not a run of \prot because it cannot transition from
    $s$ to $t$ by action $A(\vec v)$.

    Now suppose \prot does not satisfy $\pi$. Then none of the $\tau$-terms in
    $\pi$ are satisfied. Let $T = s\xrightarrow{A(\vec v)}t$ be a transition in
    \run. We will show that \prot contains all such $T$ and therefore \run is a
    run of \prot. There are two cases: (1) the action of $T$ has holes in it, or
    (2) it does not. In case (2), $T$ is a transition that is present in every
    completion of the sketch. In case (1), we can leverage that \prot violates
    all $\tau_{A(\vec v),i}$ that were constructed for $T$. If it violates all
    $\tau_{A(\vec v),i}$ for all pre-holes, then $A(\vec v)$ is enabled in state
    $s$. If it violates all $\tau_{A(\vec v),i}$ for all post-holes, then $t$ is
    a successor of $s$ by action $A(\vec v)$ in \prot. 
\end{proof}

\begin{theorem}
    Let \run be a deadlock violation of a completion of the sketch $S$.
    Then $\pi_\textit{dead}(\run)$ is sub-optimal w.r.t. \run and $S$.
    \label{thm:dead}
\end{theorem}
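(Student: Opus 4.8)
The plan is to exploit the decomposition $\pi_\textit{dead}(\run) = \pi_\textit{safe}(\run) \lor \pi_\rho(\run)$, handling each disjunct separately: I lean on Theorem~\ref{thm:safe} for the path component and analyze the newly introduced $\rho$-terms for the deadlock component. Throughout I read ``\run is a run of \prot'' in the deadlock sense, namely that the path $s_0 \to \cdots \to s_k$ is realized in \prot and that $s_k$ is deadlocked in \prot. Since sub-optimality means \emph{under-pruning but not over-pruning}, there are two obligations: (a) exhibit a completion that satisfies $\pi_\textit{dead}$ yet still exhibits \run, and (b) show that no completion falsifies $\pi_\textit{dead}$ while fixing \run.

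For the \emph{not over-pruning} part I would argue the contrapositive: if \run is not a deadlock violation of \prot, then \prot satisfies $\pi_\textit{dead}$. Either the path is broken or $s_k$ is undeadlocked. If the path is broken, then by the argument of Theorem~\ref{thm:safe} (which reasons only about the transitions of \run, and so transfers verbatim to a deadlock run) \prot satisfies $\pi_\textit{safe}$, hence $\pi_\textit{dead}$. If instead $s_k$ is undeadlocked in \prot, then some action $A(\vec v)$ that was disabled at $s_k$ in the original completion is now enabled. The crucial step is to show this forces a blocking pre-hole to flip. Because enabledness depends on the completion only through the pre-holes of $A$, and the fixed (non-hole) conjuncts of its guard evaluate identically at $s_k$ in both completions, the conjunct that was false originally must be a pre-hole $h_i$ with $\widehat h_i(s_k) = \False$; this $h_i$ therefore receives a $\rho$-term $\rho_{A(\vec v),i,k}$, and since $A(\vec v)$ is now enabled its new interpretation gives $\widehat h_i'(s_k^\star) = \True \neq \False$, so \prot satisfies that $\rho$-term and hence $\pi_\textit{dead}$.

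For \emph{under-pruning} I would construct a witnessing completion in the typical case where an action $A$ has more than one conjunct false at $s_k$, e.g. two pre-holes $h_1,h_2$ with $\widehat h_1(s_k)=\widehat h_2(s_k)=\False$, both of which receive $\rho$-terms. I take \prot to agree with the original completion except that $h_1$ is set to $\True$ at $s_k$, while $h_2$ and every other blocking pre-hole across all actions are left false and the path of \run is preserved. Then \prot satisfies $\pi_\rho$ through $h_1$'s $\rho$-term, so it satisfies $\pi_\textit{dead}$ and is not pruned; yet $A(\vec v)$ stays disabled because $h_2$ is still false, no other action becomes enabled, and so $s_k$ remains deadlocked and \run is still a deadlock violation of \prot. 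This is exactly what under-pruning requires, and it pinpoints why $\pi_\textit{dead}$ is only sub-optimal rather than optimal: satisfying one $\rho$-term is necessary but not sufficient to undeadlock $s_k$.

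I expect the main obstacle to be the not-over-pruning direction, specifically the claim that undeadlocking $s_k$ must flip some pre-hole from false to true. This is clean precisely when each pre-hole occurs as a positive (top-level) conjunct of its action's guard, which is the occurrence convention implicit in the sketch model and which makes the guard monotone in the pre-holes; I would state this convention explicitly, because without it a pre-hole appearing negatively could undeadlock $s_k$ by flipping true-to-false, a possibility the $\rho$-terms do not record and which would introduce over-pruning. A secondary point requiring care is the reuse of Theorem~\ref{thm:safe}: I would note that its proof never inspects why the terminal state violates the invariant and concerns only the presence of the transitions $s_i \xrightarrow{A_{i+1}(\vec v_{i+1})} s_{i+1}$ and of the initial state, so it applies unchanged to the path underlying a deadlock counterexample.
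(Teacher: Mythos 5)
Your proof is correct and follows essentially the same route as the paper's: the $\tau$-terms dispose of the path exactly as in Theorem~\ref{thm:safe}, the $\rho$-terms account for the deadlocked state (your not-over-pruning argument is just the contrapositive of the paper's direct one), and your under-pruning witness---one blocking pre-hole flipped at $s_k$ while another stays \False---is precisely the paper's one-line justification made explicit. The conventions you flag (pre-holes occurring as positive top-level conjuncts, enabledness depending on the completion only through pre-holes) are indeed the paper's implicit modeling assumptions, so nothing further is needed.
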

\begin{proof}
    Let $\pi := \pi_\textit{dead}(\run)$ for brevity. $\pi$ is under-pruning
    because although $\rho$ terms ensure that some pre-condition for some action
    $A$ is weakened for deadlocked state $s_k$, it is possible that multiple
    pre-conditions need to be weakened in order for $A$ to be taken in $s_k$.

    Let $\prot$ be a completion of the sketch $S$ that does not satisfy $\pi$.
    To show that $\pi$ is not over-pruning, we must show that \run is a run of
    \prot and that the final state is deadlocked in \prot. As with the
    $\pi_\textit{safe}$ proof, we know that $\prot$ has all the transitions in
    \run, since all $\tau$-terms are violated. Furthermore, we know that the
    final state of \run is deadlocked in $\prot$ because all $\rho$-terms are
    violated and therefore no pre-condition is weak enough to be taken in order
    to escape the deadlocked state. 
\end{proof}

\begin{theorem}
    Let \run be a liveness violation of a completion of the sketch $S$.
    Then $\pi_\textit{live}(\run)$ is sub-optimal w.r.t. \run and $S$.
    \label{thm:live}
\end{theorem}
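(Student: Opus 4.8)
The plan is to follow the template of the deadlock proof (Theorem~\ref{thm:dead}), establishing under-pruning and not-over-pruning separately for $\pi := \pi_\textit{live}(\run)$, where $\run$ is the lasso $s_0 \xrightarrow{A_1(\vec{v_1})} \cdots \xrightarrow{A_k(\vec{v_k})} s_k$ with $s_k = s_i$ and the cycle witnessing the liveness violation. For under-pruning I would argue as in the deadlock case: a single $\rho$-term $\rho_{A(\vec v),i,j}$ only forces one pre-hole $h_i$ of a fair action $A$ to become \True at one cycle state $s_j$, whereas $A(\vec v)$ may be disabled at $s_j$ because several of its pre-conditions are false. I would therefore exhibit a completion $\prot$ obtained from the original by flipping exactly that $h_i$ while leaving another pre-condition of $A$ false at $s_j$; since raising a pre-condition from \False to \True cannot disable an existing transition, all transitions of $\run$ survive, yet $A(\vec v)$ stays disabled throughout the cycle, the cycle remains fair, and $\run$ is still a liveness counterexample of $\prot$ --- i.e., a completion satisfying $\pi$ for which $\run$ is a run of $\prot$, as required.

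The real work is not-over-pruning: for every completion $\prot$ that does \emph{not} satisfy $\pi = \pi_\textit{safe}(\run) \lor \pi_\rho'$, I must show that $\run$ is a genuine liveness counterexample of $\prot$. Since $\prot$ satisfies neither disjunct, it violates every $\tau$-term and every $\rho$-term. From the violated $\tau$-terms I would reuse the second half of the safety proof (Theorem~\ref{thm:safe}) verbatim to conclude that every transition of the lasso --- along the path and around the cycle --- is present in $\prot$ and is still annotatable by the same $A(\vec v)$; hence the infinite run that follows the path and loops the cycle forever is a valid run of $\prot$, modulo fairness.

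The crux is then to certify that this looping run is \emph{fair} in $\prot$, so that it is not silently discarded. Because $\run$ was a liveness counterexample of the original completion, each strongly fair action either (a) is taken on some cycle transition, or (b) is disabled at every cycle state; otherwise, being enabled infinitely often yet never taken, it would already have rendered the original lasso unfair. In case (a) the witnessing cycle transition, together with its $A(\vec v)$ annotation, survives into $\prot$ by the previous paragraph, so the obligation is still discharged. In case (b), fixing such $A(\vec v)$ and cycle state $s_j$, some pre-condition of $A$ is false at $s_j$; I would split on whether it is a pre-hole $h_i$ with $\widehat h_i(s_j) = \False$ (so the term $\rho_{A(\vec v),i,j}$ exists and, being violated by $\prot$, keeps $\widehat h_i(s_j) = \False$) or a fixed pre-condition (false at $s_j$ in every completion). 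In both cases $A(\vec v)$ stays disabled at $s_j$ in $\prot$, so no fair action becomes newly enabled on the cycle. Thus the looping run is fair and still violates \prop, establishing not-over-pruning.

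The main obstacle is exactly this fairness certification. Unlike the deadlock case, where it suffices to keep one terminal state stuck, here I must rule out the enabling of \emph{every} fair action at \emph{every} cycle state across \emph{all} completions not satisfying $\pi$, and the $\rho$-terms only track pre-holes that were originally \False --- so the split between hole-induced and fixed disabling is essential, as is the observation that $\tau$-term violation preserves action annotations (needed for case (a)). I would also flag the strong- versus weak-fairness gap noted in the paper: under weak fairness an action enabled only intermittently around the cycle need not be taken, so the construction can be under-pruning; but since this can only leave extra completions unpruned and never deletes the genuine counterexample, it does not disturb the not-over-pruning direction proved here.
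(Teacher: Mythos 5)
Your proof follows essentially the same route as the paper's: under-pruning via the observation that a completion satisfying $\pi$ may weaken only one of several pre-conditions that must all be weakened to enable a fair action and make the cycle unfair, and not-over-pruning by reusing the safety argument for the violated $\tau$-terms and using the violated $\rho$-terms to certify fairness of the lasso. Your case split (fair actions taken on the cycle versus disabled at every cycle state, and hole-induced versus fixed disabling) is a faithful elaboration rather than a different approach --- indeed it makes precise a point the paper states only loosely, since the paper's claim that no fair action of \prot is enabled in the cycle is literally false for fair actions that are taken on the cycle, a case your argument handles correctly via the surviving $A(\vec v)$ annotations.
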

\begin{proof}
    Let $\pi := \pi_\textit{live}(\run)$. As with the deadlock case, $\pi$ is
    under-pruning because \prot satifying $\pi$ may only weaken one
    pre-condition of a fair action where it is necessary to weaken multiple
    pre-conditions to enable a fair action and make a cycle unfair.

    Let $\prot$ be a completion of the sketch $S$ that does not satisfy $\pi$.
    Then all $\tau$-terms are violated, so \run is a run of \prot, so long as
    fairness constraints are satisfied. Fairness constraints are satisfied
    because all of the $\rho$-terms in $\pi$ are violated. I.e., $\rho$-terms
    ensure there does not exist a fair action in \prot that is enabled in the
    cycle of $\run$.
\end{proof}

\begin{theorem}
    Let \run be a stuttering violation of a completion of the sketch $S$. Then
    $\pi_\textit{stut}(\run)$ is sub-optimal w.r.t. \run and $S$.
    \label{thm:stut}
\end{theorem}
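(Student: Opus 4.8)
The plan is to establish that $\pi_\textit{stut}(\run)$ is under-pruning but never over-pruning, mirroring the liveness argument (Theorem~\ref{thm:live}) while accounting for the two structural differences of the stuttering constraint: the extra disjunct $\pi_\tau$ ranging over post-holes of fair actions, and the restriction of $\pi_\rho'$ to the single state $s_k$, which constitutes the entire ``cycle'' of a stuttering lasso.

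For under-pruning, I would reuse the reasoning of the deadlock and liveness cases. A stuttering violation is escaped at $s_k$ only by producing some fair action that is simultaneously enabled and non-self-looping there. Since $\pi_\textit{stut}(\run)$ is a disjunction, a completion $\prot$ is retained the moment it flips a single term; for instance, it may make one pre-hole of a fair action $A$ that is $\False$ at $s_k$ evaluate to $\True$, thereby satisfying one $\rho$-term. But if $A$ has a second pre-hole that remains $\False$ at $s_k$, then $A$ is still disabled in $\prot$ and $\run$ is still a stuttering violation. This exhibits a completion satisfying $\pi_\textit{stut}(\run)$ that nonetheless exhibits $\run$, so the constraint can under-prune.

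For the crucial claim that $\pi_\textit{stut}(\run)$ is not over-pruning, I would take an arbitrary completion $\prot$ that does \emph{not} satisfy $\pi_\textit{stut}(\run) = \pi_\textit{safe}(\run) \lor \pi_\tau \lor \pi_\rho'$ and show that $\run$ remains a stuttering violation of $\prot$. Failing the disjunction means $\prot$ violates every term of all three parts. Violating every $\tau$-term of $\pi_\textit{safe}(\run)$ gives, by the argument of Theorem~\ref{thm:safe}, that every transition of $\run$ is present in $\prot$, so the path to $s_k$ is preserved. It then remains to show that $s_k$ is still a stuttering state, i.e., no fair action is both enabled and non-self-looping at $s_k$ in $\prot$. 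I would case-split on the status of each fair action $A$ at $s_k$ in the original completion, which by definition of stuttering is either enabled-and-self-looping or disabled. If $A$ is enabled and self-looping, then violating the $\tau$-terms of $\pi_\tau$ forces every post-hole of $A$ to return its original value at $s_k$; since the non-synthesized post-conditions agree across completions, the successor of $s_k$ under $A$ is unchanged and $A$ stays self-looping. If $A$ is disabled, then some conjunct of its pre-condition evaluates to $\False$ at $s_k$: if that conjunct is a pre-hole, it is one of the $\False$ pre-holes for which a $\rho$-term was built, and violating that term keeps it $\False$; if the conjunct is not synthesized, it is $\False$ in every completion. Either way $A$ stays disabled. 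Hence every fair action at $s_k$ is self-looping or disabled in $\prot$, so $s_k$ stutters and $\run$ is a genuine stuttering violation, justifying the pruning.

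I expect the main obstacle to be precisely this case analysis, and in particular the clean division of labor between $\pi_\tau$ and $\pi_\rho'$: one must argue that the post-holes alone determine whether an enabled fair action is self-looping (so pinning them preserves self-looping), while the $\False$ pre-holes capture the only way the synthesizer could newly enable a disabled fair action (so pinning them preserves disabledness), and that together these exhaust every route by which $s_k$ could cease to stutter. As with $\pi_\textit{live}$, some care is needed for the distinction between weak and strong fairness; but since the construction only ever adds disjuncts that block an escape from stuttering, it remains sound---never over-pruning---even when it is too weak and thus under-prunes.
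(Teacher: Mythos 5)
Your proof is correct and takes essentially the same route as the paper's: under-pruning is shown via a fair action requiring multiple pre-holes to be weakened, and non-over-pruning by observing that a completion \prot violating every term of $\pi_\textit{safe}(\run)$, $\pi_\tau$, and $\pi_\rho'$ preserves the path to $s_k$, keeps the disabled fair actions disabled, and keeps the enabled fair actions self-looping. Your explicit case analysis (including non-synthesized pre-condition conjuncts and the role of fixed post-clauses) is in fact slightly more detailed than the paper's rendering of the same argument.
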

\begin{proof}
    Let $\pi := \pi_\textit{stut}(\run)$ and suppose $\prot_0$ is a completion
    of $S$ that exhibits \run. As with the deadlock and liveness violations,
    $\pi$ is under-pruning because \prot satisfying $\pi$ may only weaken one
    pre-condition of a fair action where it is necessary to weaken multiple
    pre-conditions to enable a fair action and make stuttering unfair.

    Let $\prot$ be a completion of the sketch $S$ that does not satisfy $\pi$.
    We must show that \run is a run of \prot. All terms of
    $\pi_\textit{safe}(\run)$ are violated, so the last state, $s_k$, of \run is
    reachable in \prot by taking the sequence of transitions in $r$. Now, each
    fair action $A$ of \prot is either enabled or disabled in $s_k$. We must
    show that all enabled fair actions are self-looping. Because the terms in
    $\pi_\rho'$ are violated, we know that the non-self-looping fair actions
    that were disabled in state $s_k$ of $\prot_0$ are also disabled in state
    $s_k$ of $\prot$. Because the terms in $\pi_\tau$ are violated, we know that
    states that were self-looping in $\prot_0$ are still self-looping in
    $\prot$, if they are enabled in \prot.
\end{proof}

\end{document}